\newtheorem{thm}{Theorem}
\newtheorem{lem}[thm]{Lemma}
\theoremstyle{plain}
\newcommand{\Z}{\mathbb{Z}}
\newcommand{\F}{\mathbb{F}}
\title{Unary Subset-Sum is in Logspace}
\author{Daniel M. Kane}
\begin{document}
\maketitle

\section{Introduction}

In this paper we consider the Unary-Subset-Sum problem which is defined as follows:
Given integers $m_1,\ldots,m_n$ and $B$ (written in unary), we define the subset sum problem to be that of determining whether or not there exists an $S\subseteq [n]$ so that $\sum_{i\in S} m_i = B$ (note that for this problem the $m_i$ are often assumed to be non-negative).  Let $C=|B| + \sum_{i=1}^n |x_i|+1$.  This problem can be solved using a standard dynamic program using space $O(C)$ and time $O(Cn)$. The dynamic program makes fundamental use of this large space and it is interesting to ask whether this requirement can be removed.  Unary Subset-Sum has been studied in small-space models of computation as early as 1980 in \cite{book}, where they showed that it was in $NL$.  Since then the problem was studied in \cite{comp}, where Cho and Huynh devised a complexity class between $L$ and $NL$ that contained Unary Subset-Sum as supporting evidence that it is not $NL$-complete.  This problem was listed again in \cite{overview} claiming it to be an open problem as to whether or not it is in $L$.  In 2010 it was recently shown in \cite{alternate} that this problem was in Logspace as a consequence of a much more general algorithm.  We provide a simple algorithm solving this problem in Logspace, which is also implementable in $\mathrm{TC}^0$.

\section{Our Algorithm}

The basic idea of our algorithm will be to make use of the generating function $\prod_{i=1}^n (1+x^{m_i}) = \sum_{S\subseteq[n]} x^{\sum_{i\in S} m_i}$ to compute the number of solutions to our problem modulo $p$ for a number of different primes $p$ (we show how to do this in Lemma \ref{mainLemma}).  Pseudocode for our algorithm is follows:

\begin{figure}[H]
\begin{tabbing}
\ \ \ \ \ \= \ \ \ \ \ \= \\
$c:=0$\\
$p: = \textrm{NextPrime}(C)$\\
$\textrm{While}(c \leq n)$\\
\> If $\sum_{x=1}^{p-1}x^{-B}\prod_{i=1}^n (1+x^{m_i}) \not\equiv 0 \pmod{p}$\\
\> \> Return True\\
\> $c:=c + \lfloor \log_2(p)\rfloor$\\
\> $p := \textrm{NextPrime}(p)$\\
End While\\
Return False
\end{tabbing}
\end{figure}

\subsection{Complexity}

There are several things that must be noted to show that this algorithm runs in logspace.  First, we claim that $p$ is never more than polynomial in size.  This is because standard facts about prime numbers imply that there are at least $n$ primes between $C$ and $\textrm{poly}(C,n)$, and each of these primes causes $c$ to increase by at least 1.  We also note that $\sum_{x=1}^{p-1}x^{-B}\prod_{i=1}^n (1+x^{m_i})$ can be computed modulo $p$ in Logspace.  This is because we can just keep track of the value of $x$ and the current running total (modulo $p$) along with the space necessary to compute the next term.  The product is computed again by keeping track of $i$ and the current running product (modulo $p$) and whatever is necessary to compute the next term.  The exponents are computed in the obvious way.  Finally primality testing of poly-sized numbers can be done by repeated trial divisions in Logspace, and hence the NextPrime function can also be computed in Logspace.

In fact, this function can also be computed in $\mathrm{TC}^0$. The function is clearly an OR over possible values of $p$. Each input requires computing a polynomial sized sum of polynomial sized products of sums of exponentials all modulo $p$. As all of these operations are known to be computable in $\mathrm{TC}^0$, the composition is as well.

\subsection{Correctness}

We now have to prove correctness of the algorithm.  Let $A$ be the number of subsets $S\subseteq [n]$ so that $\sum_{i\in S} m_i = B$.

\begin{lem}\label{mainLemma}
For $p$ a prime number, $p> C$.  Then
$$
\sum_{x=1}^{p-1}x^{-B}\prod_{i=1}^n (1+x^{m_i}) \equiv -A \pmod{p}.
$$
Where again $A$ is the number of subsets $S\subseteq [n]$ so that $\sum_{i\in S} m_i = B$.
\end{lem}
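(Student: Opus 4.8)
The plan is to work inside the field $\F_p$ and to use the fact that summing over $x=1,\ldots,p-1$ is exactly summing over the multiplicative group $\F_p^*$, on which power sums have very clean behaviour. First I would expand the generating function exactly as described in the introduction,
$$
\prod_{i=1}^n (1+x^{m_i}) = \sum_{S\subseteq[n]} x^{\sum_{i\in S} m_i},
$$
so that after multiplying by $x^{-B}$ and interchanging the two finite sums the left-hand side becomes
$$
\sum_{S\subseteq[n]} \ \sum_{x=1}^{p-1} x^{\,e_S}, \qquad e_S := \Big(\sum_{i\in S} m_i\Big) - B.
$$
Here $x^{m_i}$ and $x^{-B}$ are to be read as powers in $\F_p^*$, which is legitimate precisely because $x$ ranges only over the nonzero residues, so that inverses always exist.

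Second, I would establish the standard power-sum identity over $\F_p^*$: for every integer $e$,
$$
\sum_{x=1}^{p-1} x^{\,e} \equiv \begin{cases} -1 \pmod p & \text{if } (p-1)\mid e, \\ 0 \pmod p & \text{otherwise.} \end{cases}
$$
When $(p-1)\mid e$ this is immediate from Fermat's little theorem, since then $x^{\,e}\equiv 1$ for every $x\in\F_p^*$ and the sum is $(p-1)\cdot 1\equiv -1$. When $(p-1)\nmid e$, I would fix a generator $g$ of the cyclic group $\F_p^*$ and evaluate the geometric sum $\sum_{j=0}^{p-2} g^{je} = (g^{(p-1)e}-1)/(g^e-1)$; the numerator vanishes because $g^{p-1}=1$, while the denominator is nonzero because $g$ has order $p-1$ and $(p-1)\nmid e$, so the whole sum is $0$.

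Third, and this is the step where the hypothesis $p>C$ is essential, I would argue that for each $S$ the divisibility $(p-1)\mid e_S$ is equivalent to the exact equality $e_S=0$. Indeed, $|e_S|\le |B|+\sum_{i=1}^n |m_i| = C-1$, and since $p>C$ forces $p-1\ge C$, we get $|e_S|\le C-1 < p-1$; thus $e_S$ lies strictly between $-(p-1)$ and $p-1$, and the only multiple of $p-1$ in that open interval is $0$. Hence $(p-1)\mid e_S$ holds precisely when $\sum_{i\in S} m_i = B$, i.e.\ exactly for the $A$ subsets counted by the lemma. Combining this with the second step, each such subset contributes $-1$ modulo $p$ and every other subset contributes $0$, so the total is $\equiv -A \pmod p$, as claimed.

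The only delicate point, and therefore the main (if mild) obstacle, is the range-and-sign bookkeeping in the third step, including the verification that allowing negative $m_i$ does no harm: the definition of $C$ is stated in terms of $|B|$ and the $|m_i|$ exactly so that the bound $|e_S|\le C-1$ continues to hold, and negative exponents cause no difficulty because every element of $\F_p^*$ is invertible.
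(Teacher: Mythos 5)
Your proof is correct and follows essentially the same route as the paper's: expand the product into a sum over subsets, interchange the order of summation, apply the power-sum identity over $\F_p^*$ (proved the same way, via Fermat's little theorem and a geometric sum in a primitive root), and use $p>C$ to conclude that $(p-1)\mid e_S$ forces $e_S=0$. Your bookkeeping of the exponent bound $|e_S|\le C-1<p-1$ is the same observation the paper makes, just spelled out more carefully.
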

\begin{proof}
Note that $$x^{-B}\prod_{i=1}^n (1+x^{m_i})=\sum_{S\subseteq [n]} x^{\sum_{i\in S} m_i - B}.$$  The idea of our proof will be to interchange the order of summation and show that the terms for which $\sum_{i\in S} m_i \neq B$ cancel out.

Notice that each exponent in this sum has absolute value less than $p-1$.  Interchanging the sums on the right hand side, we find that
$$
\sum_{x=1}^{p-1}x^{-B}\prod_{i=1}^n (1+x^{m_i}) = \sum_{S\subseteq [n]} \sum_{x=1}^{p-1}x^{\sum_{i\in S} m_i - B}.
$$
We note that:
$$
\sum_{x=1}^{p-1} x^k \pmod{p} \equiv \begin{cases} -1 \ & \textrm{if} \ k \equiv  0  \pmod{p-1} \\ 0 \ & \textrm{else} \end{cases}.
$$
If $k$ is a multiple of $p-1$, then all terms in the sum are 1 modulo $p$ and the result follows.  Otherwise, we let $g$ be a primitive root mod $p$ and note that instead of summing over $x=1$ to $p-1$ we may sum over $x=g^\ell$ for $\ell=0$ to $p-2$.  Then
$$
\sum_{x=1}^{p-1} x^k  \equiv \sum_{\ell=0}^{p-2} g^{k\ell} \equiv \frac{1-g^{k(p-1)}}{1-g^k} \equiv \frac{1-1}{1-g^k} \equiv 0.
$$
Hence
$$
\sum_{x=1}^{p-1}x^{-B}\prod_{i=1}^n (1+x^{m_i}) = \sum_{S\subset [n]} \sum_{x=1}^{p-1}x^{\sum_{i\in S} m_i - B} \equiv \sum_{\substack{S\subseteq [n] \\ \sum_{i\in S} x_i \equiv B \pmod{p-1}}} -1.
$$
Since $p-1$ is larger than $C$, $\sum_{i\in S} x_i \equiv B \pmod{p-1}$ if and only if $\sum_{i\in S} x_i = B$.  Hence this sum contributes -1 for each such $S$ and so the final sum is $-A$.
\end{proof}

We are now ready to prove correctness. If $\sum_{x=1}^{p-1}x^{-B}\prod_{i=1}^n (1+x^{m_i}) \not\equiv 0 \pmod{p}$ for some $p>C$, then by our Lemma, this means that $A\not\equiv 0 \pmod{p}$.  In particular, this means that $A\neq 0$, and that therefore there is some such $S$.  Consider an integer $d$ which is equal to the product of the primes $p$ that have been checked so far.  Then $d$ is a product of distinct primes $p$ so that $-A \equiv \sum_{x=1}^{p-1}x^{-B}\prod_{i=1}^n (1+x^{m_i}) \equiv 0 \pmod{p}$.  Therefore $d|A$.  Furthermore it is the case that $d\geq 2^c$.  It is clear from the definition of $A$ that $0\leq A \leq 2^n$.  Therefore if $c>n$, $d>2^n$ and $d|A$, which implies that $A=0$, and that therefore there are no solutions.  Hence our algorithm always outputs correctly.

\section{Extensions}

There are some relatively simple extensions of this algorithm.  For one thing, our algorithm does more than tell us whether or not $A$ is equal to 0, but also tells us congruential information about $A$.  We can in fact obtain more refined congruential information than is apparent from our Lemma.  We can also use this along with the Chinese Remainder Theorem to compute a numerical approximation of $A$.  Finally a slight generalization of these techniques allows us to work with $m_i$ vector valued rather than integer-valued.

\subsection{Computing Congruences}

We show above how to compute $A$ modulo $p$ for $p$ a prime larger than $C$.  But in fact if $p$ is any prime and $k>1$ any integer, $A$ can be computed modulo $p^k$ in $O(\log((p+C)^k))$ space.

If $p>C$, then we have that
$$
A \equiv \frac{1}{p-1} \sum_{x=1}^{p-1}x^{-B}\prod_{i=1}^n (1+x^{m_i}) \pmod{p}.
$$
On the other hand if $p\leq C$, the above expression will only count the number of subsets that give the correct sum modulo $p-1$.  We can fix this by letting $q=p^\ell$ for some integer $\ell$ so that $q>C$.  Then for the same reasons that the above is true, it will be the case that
$$
A \equiv \frac{1}{q-1} \sum_{x\in \mathbb{F}_q^*}x^{-B}\prod_{i=1}^n (1+x^{m_i}) \pmod{p}.
$$
Where $\F_q$ is the finite field of order $q$.

If we have $k>1$ and $p>C$ we note that again for the same reasons
$$
A \equiv \frac{1}{p-1} \sum_{x\in \mu_{p-1}}x^{-B}\prod_{i=1}^n (1+x^{m_i}) \pmod{p^k}.
$$
Where $\mu_{p-1}$ is the set of $(p-1)^{st}$ roots of unity in $\Z/p^k$.  This computation can be performed without difficulty in $\Z/p^k$.  We again run into difficulty if $p<C$. This can be solved by performing the above computation in the Witt vectors of $\F_q$ modulo $p^k$ for $q>C$ some power of $p$, and taking the sum over $\mu_{q-1}$. This is at the cost of requiring $O(\log(q^r))$ space.

\subsection{Approximating the Number of Solutions}

It is also possible in Logspace to approximate the number of solutions, $A$, computing logarithmically many significant bits.  This can be done using the Chinese Remainder Theorem.  Suppose that $p_1,\ldots,p_k$ are distinct primes.  By the above we can compute $A$ modulo $p_i$ for each $i$.  Let $N=\prod_{i=1}^k p_i$, and $N_i=\frac{N}{p_i}$.  The Chinese Remainder Theorem tells us that
$$
A \equiv \sum_{i=1}^k N_i \left( A \pmod{p_i} \right) \left( N_i^{-1} \pmod{p_i}\right) \pmod{N}.
$$
Or in other words,
$$
\frac{A}{N} \equiv \sum_{i=1}^k \left( \frac{1}{p_i}\right)\left( A \pmod{p_i} \right) \left( N_i^{-1} \pmod{p_i}\right) \pmod{1}.
$$
Now we can compute $A$ modulo $p_i$ by the above.  We can also compute $N_i^{-1} \equiv \prod_{j\neq i} p_j^{-1} \pmod{p_i}$.  Hence we can compute each term in the sum to logarithmically many bits.  Hence in logspace we can compute
$$
\frac{A}{N} \pmod{1}
$$
to logarithmically many bits of precision.  If $2A>N>A$, this allows us to compute logarithmically many significant bits of $A$.  We can find such an $N$ by starting with an $N>2^n\geq A$ and repeatedly trying $N$ at least half as big as the previous $N$ until $N<2A$ (we can find our next $N$ by either removing the prime 2 from $N$ or replacing the smallest prime dividing $N$ by one at least half as big (which exists by Bertrand's postulate)).

It should also be noted that this ability to approximately count solutions in Logspace allows us to approximately uniformly sample from the space of solutions in Randomized Logspace.  This is done by deciding whether or not each element is in $S$ one-by-one and putting it in with probability nearly equal to the proportion of the remaining solutions that have that element in $S$.

It should also be noted that by performing the above computation modulo $m$ for any $m$, $A$ can be computed mod $m$ in $O(\log(m+C))$ space (though in a somewhat less elegant way than above).

\subsection{Vector-Valued Inputs}

We consider the slightly modified subset sum problem where now $m_i$ and $B$ lie in $\Z^k$, and again we wish to determine whether or not there exists and $S$ so that $\sum_{i\in S} m_i = B$.  If we let $C$ be one more than the sum of the absolute values of the coefficients of the $m_i$ plus the absolute values of the coefficients of $B$, a slight modification of our algorithm allows us to solve this problem in $O(k\log(C))$ space and $C^{O(k)}$ time (in particular if $k=O(1)$, this runs in $O(\log(C))$ space and $C^{O(1)}$ time).

There are two ways to do this.  One is simply to treat our vectors as base $C$-expansions of integers and reduce this to our previous algorithm.  Another technique involves a slight generalization of our Lemma.  In either case we let $m_i=(m_{i,1},\ldots,m_{i,k})$, $B=(B_1,\ldots,B_k)$.

For the first algorithm, we let $m_i' = \sum_{j=1}^k C^{j-1}m_{i,j}$ and $B'=\sum_{j=1}^k C^{j-1}B_{j}.$  We claim that for any $S\subseteq[n]$ that $\sum_{i\in S}m_i = B$ if and only if $\sum_{i\in S}m_i' = B'$, thus reducing this to an instance of our original problem.  The claim holds because
$$
\sum_{i\in S} m_i'-B' = \sum_{j=1}^k C^{j-1} \left(\sum_{i\in S} m_{i,j} - B_j \right) = \sum_{j=1}^k C^{j-1} e_j.
$$
Since the $e_j$ are all integers of absolute value less than $C$, this sum is 0 if and only if, each of the $e_j$ are 0.  Hence $\sum_{i\in S}m_i = B$ if and only if $\sum_{i\in S}m_i' = B'$.

Another way to do this is by generalizing our Lemma.  In particular it can be shown using similar techniques that if $A$ is the number of subsets $S$ that work, and if $p$ is a prime bigger than $C$ that
$$
-A \equiv \sum_{x_1,\ldots,x_k=1}^{p-1} \left(\prod_{i=1}^k x_i^{-B_i} \right)\left(\prod_{i=1}^n \left(1+\prod_{j=1}^k x_j^{m_{i,j}}\right) \right)\pmod{p}.
$$
Given this, there is a natural generalization of our algorithm.

It should also be noted that both of these techniques allow us to use the above-stated generalizations to our algorithm in the vector-valued context.

This generalization also allows us to solve some related problems, such as the Unary 0-1 Knapsack problem.  This problem is defined as follows: You are given a list of integer weights $w_1,\ldots,w_n$, a list of integer values, $v_1,\ldots,v_n$, and an integer bound $B$.  The objective is to find a subset $S\subseteq [n]$ so that$\sum_{i\in S} v_i$ is as large as possible subject to the restriction that $\sum_{i\in S}w_i \leq B$.  We do this by determining all possible pairs of $(\sum_{i\in S}w_i,\sum_{i\in S} v_i)$ by applying our algorithm to $m_i=(w_i,v_i)$ and $B=(w,v)$ for all $|w|\leq \sum_{i=1}^n |w_i|, |v|\leq \sum_{i=1}^n |v_i|$.  Of the pairs $(w,v)$ for which there is a solution, we keep track of the largest $v$ that corresponds to a $w\leq B$.  From this pair it is also not hard to use our algorithm to find a subset $S$ which achieves this bound.

\end{document}